\newcommand{\beq}{\begin{eqnarray}}
\newcommand{\eeq}{\end{eqnarray}}
\newcommand{\beqn}{\begin{eqnarray}}
\newcommand{\eeqn}{\end{eqnarray}}
\newcommand{\bea}{\begin{eqnarray}}
\newcommand{\eea}{\end{eqnarray}}
\newcommand{\be}{\begin{equation}}
\newcommand{\ee}{\end{equation}}
\newcommand{\un}[1]{\underline{#1}}
\newcommand{\uiuc}[1]{
	\centerline{
		\begin{minipage}[c]{0.7\textwidth}
			\begin{center}
			${}^{#1}$ Illinois Center for Advanced Studies of the Universe \& Department of Physics,\\ 
			University of Illinois, 1110 West Green St., Urbana IL 61801, U.S.A.
			\end{center}
		\end{minipage}
		}
	}
 \newcommand{\nyu}[1]{
	\centerline{
		\begin{minipage}[c]{0.7\textwidth}
			\begin{center}
			${}^{#1}$ Center for Cosmology and Particle Physics, New York University, New York, NY 10003, USA
			\end{center}
		\end{minipage}
		}
	}
  \newcommand{\upenn}[1]{
	\centerline{
		\begin{minipage}[c]{0.7\textwidth}
			\begin{center}
			${}^{#1}$David Rittenhouse Laboratory, University of Pennsylvania, Philadelphia, PA 19104, USA
			\end{center}
		\end{minipage}
		}
	}
\renewcommand\mathbb[1]{\mathbbm{#1}}
\newcommand\saa[1]{{\color{teal}[\textbf{Shadi}: #1]}}
\DeclareRobustCommand{\loplus}{\mathbin{\mathpalette\dog@lsemi{+}}}
\DeclareRobustCommand{\lotimes}{\mathbin{\mathpalette\dog@lsemi{\times}}}
\DeclareRobustCommand{\roplus}{\mathbin{\mathpalette\dog@rsemi{+}}}
\DeclareRobustCommand{\rotimes}{\mathbin{\mathpalette\dog@rsemi{\times}}}
\newcommand{\dog@rsemi}[2]{\dog@semi{#1}{#2}{-90,90}}
\newcommand{\dog@lsemi}[2]{\dog@semi{#1}{#2}{270,90}}
\newcommand{\dog@semi}[3]{%
  \begingroup
  \sbox\z@{$\m@th#1#2$}%
  \setlength{\unitlength}{\dimexpr\ht\z@+\dp\z@\relax}%
  \makebox[\wd\z@]{\raisebox{-\dp\z@}{%
    \begin{picture}(1,1)
    \linethickness{\variable@rule{#1}}
    \roundcap
    \put(0.5,0.5){\makebox(0,0){\raisebox{\dp\z@}{$\m@th#1#2$}}}
    \put(0.5,0.5){\arc[#3]{0.5}}
    \end{picture}%
  }}%
  \endgroup
}
\newcommand{\variable@rule}[1]{%
  \fontdimen8  
  \ifx#1\displaystyle\textfont3\else
    \ifx#1\textstyle\textfont3\else
      \ifx#1\scriptstyle\scriptfont3\else
        \scriptscriptfont3\relax
  \fi\fi\fi
}
\DeclareRobustCommand{\loplus}{\mathbin{\mathpalette\dog@lsemi{+}}}
\newcommand{\nn}{\nonumber}
\newcommand{\thistitle}{Relational Quantum Geometry}
\begin{document}

\title{\thistitle}
\author{
Shadi Ali Ahmad$^{a},$ Wissam Chemissany$^{b},$ Marc S. Klinger$^{c}$, and Robert G. Leigh$^{c}$
	\\
	\\
    {\small \emph{\nyu{a}}} \\ \\
    {\small \emph{\upenn{b}}} \\ \\
	{\small \emph{\uiuc{c}}}
	\\
	}
\date{}
\maketitle
\vspace{-0.5cm}
\begin{abstract}
A common feature of the extended phase space of gauge theory, the crossed product of quantum theory, and quantum reference frames (QRFs) is the adjoining of degrees of freedom followed by a constraining procedure for the resulting total system. Building on previous work, we identify non-commutative or quantum geometry as a mathematical framework which unifies these three objects. We first provide a rigorous account of the extended phase space, and demonstrate that it can be regarded as a classical principal bundle with a Poisson manifold base. We then show that the crossed product is a trivial quantum principal bundle which both substantiates a conjecture on the quantization of the extended phase space and facilitates a relational interpretation. Combining several crossed products with possibly distinct structure groups into a single object, we arrive at a novel definition of a quantum orbifold. We demonstrate that change of frame maps within the quantum orbifold correspond to quantum gauge transformations, which are QRF preserving maps between crossed product algebras. Finally, we conclude that the quantum orbifold is equivalent to the G-framed algebra proposed in prior work, thereby placing systems containing multiple QRFs squarely in the context of quantum geometry.
\vspace{0.3cm}
\end{abstract}
\begingroup
\hypersetup{linkcolor=black}
\tableofcontents
\endgroup
\noindent\rule{\textwidth}{0.6pt}
\setcounter{footnote}{0}
\renewcommand{\thefootnote}{\arabic{footnote}}

\newcommand{\curr}[1]{\mathbb{J}_{#1}}
\newcommand{\constr}[1]{\mathbb{M}_{#1}}
\newcommand{\chgdens}[1]{\mathbb{Q}_{#1}}
\newcommand{\spac}[1]{S_{#1}}
\newcommand{\hyper}[1]{\Sigma_{#1}}
\newcommand{\chg}[1]{\mathbb{H}_{#1}}
\newcommand{\ThomSig}[1]{\hat{\Sigma}_{#1}}
\newcommand{\ThomS}[1]{\hat{S}_{#1}}
\newcommand{\discuss}[1]{{\color{red} #1}}

\newcommand{\gravgroup}{G_L}
\newcommand{\gaugegroup}{G_G}
\newcommand{\gravL}{L_L}
\newcommand{\gaugeL}{L_G} 
\newcommand{\solder}{e}
\newcommand{\sym}{\Omega}
\newcommand{\symdens}{\omega}
\newcommand{\sympot}{\Theta}
\newcommand{\sympotdens}{\theta}
\newcommand{\chgcon}[1]{\chg{#1}^{(1)}}
\newcommand{\chgchg}[1]{\chg{#1}^{(2)}}
\newcommand{\cpalg}{\mathcal{M}_{cp}}

\newtheorem{theorem}{Theorem}[section]
\newtheorem{example}{Example}[section]
\newtheorem{proof}{Proof}[section]
\newtheorem{definition}{Definition}[section]

\newcommand{\red}[1]{{\color{red} #1}}
\newcommand{\purple}[1]{{\color{purple} #1}}

\section{Introduction}
The problem of constrained quantum systems and gauge theories is a very intricate one with a long history~\cite{dirac1964lectures}. The starting point is a system admitting a gauge symmetry captured by a group $G$. That the symmetry is gauged means that observables related by group transformations must be identified rather than treated as physically distinct. By consequence, the constrained physics is encoded in the subset of observables which are invariant under the group action. With this being said, the procedure of passing from the full set of kinematical observables to its gauge-invariant subset is not without its technical challenges. Addressing one of the more thorny subtleties of this approach, it is sometimes the case that the constraining procedure completely trivializes the set of legal observables~\cite{Higuchi:1991tk, Higuchi:1991tm, Marolf:2008hg, chandrasekaran_algebra_2022}. Does that mean there are \textit{no} degrees of freedom in the theory? Surely, this cannot be the case in full generality. A fruitful way forward is to adjoin additional degrees of freedom carrying a representation of the symmetry group to act as a reference frame for the overall system. Together, the original system and the reference frame degrees of freedom must now be constrained appropriately to reduce to a gauge invariant theory. It has been observed, and was in fact proven in \cite{ahmad2024quantumreferenceframestopdown}, that this procedure always produces a non-empty set of physical observables. 

The procedure of adjoining degrees of freedom to `complete' a physical theory has three significant avatars in different fields of physics:
\begin{enumerate}
    \item \textbf{Extended phase space:} At the classical level, a system is described by a phase space. In the case of a constrained system, one needs to perform a reduction from the full phase space to a restricted version on which the constraints of the theory are implemented~\cite{henneaux_quantization_1992}. Mathematically, to perform such a reduction one must ensure that the symmetry action is realized equivariantly on the phase space. In physical parlance, this is related to the integrability of the Noether charges of the theory. In cases where the symmetry charges are \textit{not} integrable~\cite{marsden1974reduction}, one may first move to a bigger phase space, termed the extended phase space, to ensure the existence of a proper reduction. In ~\cite{Ciambelli:2021nmv, Klinger:2023qna, Jia:2023tki} it has been demonstrated how, given a phase space with a symmetry action as captured by a \textit{Poisson dynamical system}, one can always construct such an extended phase space in which the symmetry charges are integrable. In field theory scenarios, the additional degrees of freedom which are added into the theory often encode what are conventionally referred to as `edge modes' \cite{Donnelly:2022kfs,Freidel:2021dxw,Freidel:2021cjp,Donnelly:2020xgu,Freidel:2020xyx,Freidel:2020svx,Freidel:2020ayo,Carrozza:2022xut,Carrozza:2021gju,Goeller:2022rsx}, and are intimately related to the corner proposal \cite{Ciambelli:2021vnn,Ciambelli:2022cfr}. 
    \item \textbf{Crossed product algebras:} At the quantum level, a system is described by an algebra of observables which is generically $C^{*}$ to ensure a Hilbert space representation, and which therefore may be further completed to a von Neumann algebra~\cite{Takesaki1, Takesaki2, Takesaki3}. If the system admits a symmetry in the form of a group automorphism action we can ask whether the symmetry is implemented via conjugation by a representation of the group inside of the algebra. When this is the case we call the automorphism action inner, else it is termed outer. This is an algebraic analog of the integrability of charges in phase space. The collection of an algebra together with a group automorphism action is called an \emph{algebraic dynamical system}. Given any such system one can construct an extended algebra called the \emph{crossed product} which is obtained by combining degrees of freedom from the original algebra with a unitary representation of the group acting compatibly on a single enlarged Hilbert space. The crossed product can alternatively be thought of as a tensor product of system and symmetry generators \textit{constrained} by an extended symmetry action. By construction, the automorphism action used to define a crossed product is manifestly inner therein, positioning the crossed product as an algebraic foil to the extended phase space. 
    \item \textbf{Quantum reference frames}: A famous problem in quantum gravity is the problem of time~\cite{schopf_battelle_1970, isham_canonical_1993,Anderson:2010xm,rovelli_time_1991, Kuchar1991-KUCTPO-4, barbour_timelessness_1994,barbour_timelessness2_1994,butterfield_arguments_2006, kiefer_quantum_2012, rovelli_quantum_2004}, which arises because of the Hamiltonian constraint of general relativity. Intuitively, this constraint is a consequence of the inability to pick a preferred time in a metric theory of gravity. There are several well documented problems which arise in constraining gravity to respect the Hamiltonian constraint \cite{Ashtekar1991-ASHCPO}. One potential resolution is relational: adjoin additional degrees of freedom which can serve as a reference frame, or clock, for the gravitational degrees of freedom~\cite{Domagala:2010bm, Thiemann:2006up, Husain:2011tk}. In toy models with a Hamiltonian constraint~\cite{page_evolution_1983,wootters_time_1984} one adjoins a clock algebra of observables to the system and constrains the total system. Given a more general symmetry group, one adjoins an appropriate reference algebra which is a generalization of clock observables. This procedure instantiates the quantum reference frame (QRF) program~\cite{aharonov_quantum_1984, rovelli_quantum_1991, kitaev_superselection_2004, bartlett_reference_2007, gour_resource_2008,  girelli_quantum_2008, bartlett_quantum_2009, angelo_physics_2011, angelo_kinematics_2012,  palmer_changing_2014, pienaar_relational_2016, smith_quantum_2016, miyadera_approximating_2016, loveridge_relativity_2017, belenchia_quantum_2018, giacomini_quantum_2019, hoehn_equivalence_2020, castro-ruiz_quantum_2020, Danielson:2021egj,hoehn_trinity_2021, castro-ruiz_relative_2021, giacomini_spacetime_2021, ali_ahmad_quantum_2022, cepollaro_quantum_2022, giacomini_second-quantized_2022, apadula_quantum_2022}, which treats reference frames as quantum systems in their own right as dictated by operational principles.
\end{enumerate}

In previous work, we have proposed useful correspondences between these three arenas. In Refs.~\cite{Klinger:2023auu, Klinger:2023tgi}, it was conjectured that the quantization of an extended phase space is a crossed product algebra in which the algebraic dynamical system of the latter is functorially related to the Poisson dynamical system of the former. In Ref.~\cite{ahmad2024quantumreferenceframestopdown} it was illustrated that the crossed product can be interpreted directly as a quantum reference frame. In Ref.~\cite{Ciambelli:2024swv}, it has been argued that the relational resolution to the problem of time comes to pass in the quantization of gravity on null hypersurfaces via a careful implementation of the extended phase space.   In this work, we unify these three fields and realize them as different limits of a single mathematical structure called the quantum principal bundle (QPB)~\cite{Brzezinski:1992hda, budzynski1994quantumprincipalfiberbundles, Calow_2002, beggs_quantum_2020 }, or, to be more precise, a slightly more general structure we introduce called a quantum orbifold. 

We begin in Section \ref{sec: extended phase space} by providing a rigorous mathematical definition of the extended phase space, and demonstrating that it possesses the structure of a classical principal bundle whose base manifold is Poisson. We also recall the definition of a von Neumann crossed product algebra starting from a covariant representation, and exhibit the compelling correspondence between the crossed product and the extended phase space. In Section \ref{sec: Intro QPB} we introduce the notion of a trivial quantum principal bundle by quantizing the classical definition in a sequence of two steps. First, we reformulate the standard definition in terms of commutative algebras of functions on measure spaces. Second, we promote the function algebras to general non-commutative algebras. This results in a notion of quantum principal bundle which automatically reproduces the classical version in the appropriate limit. In Section \ref{sec: CP = QPB} we provide an explicit proof equating \textit{trivial} QPBs satisfying certain analytic conditions and von Neumann crossed product algebras. This result settles the conjecture of~\cite{Klinger:2023auu, Klinger:2023tgi} identifying the quantization of the extended phase space with the crossed product. Utilizing the insights of~\cite{ahmad2024quantumreferenceframestopdown}, in Section \ref{sec: Relational QPBs} we provide a relational interpretation for QPBs. Most crucially, in Section \ref{sec: QGT} we introduce quantum gauge transformations and demonstrate that they encode QRF preserving maps between crossed product algebras. We exhibit the role of such maps in the context of constraint quantization in Section \ref{sec: constraint quant}. Finally, in Section \ref{sec: locally trivial} we provide an algebraic description of \emph{locally} trivial quantum principal bundles. We conclude by realizing our previously defined G-framed algebra, a structure housing multiple different crossed products and so different QRFs, as a \textit{quantum orbifold}. 

For the benefit of the reader, we summarize the logical progression of our paper in Figure \ref{fig: graphic}. We hope that this note leads to a productive synergy among the high energy theory, quantum gravity, QRF, and quantum geometry communities. As a side goal, our recent work aims to bring the attention of the high energy community to the crossed product \textit{independently} of its usefulness in obtaining semifinite algebras from which entropic observables can be computed~\cite{Klinger:2023auu,witten_gravity_2022, chandrasekaran_large_2022, Sorce:2023fdx, AliAhmad:2023etg, AliAhmad:2024eun}.

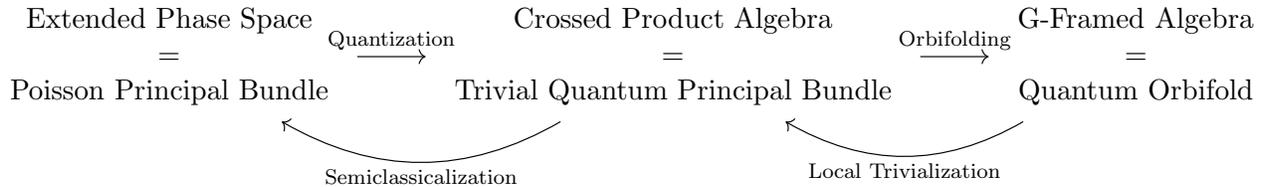
\begin{figure}
    \centering
    \begin{tikzcd}
    \text{\begin{tabular}{c}
        Extended Phase Space \\
        $=$ \\
        Poisson Principal Bundle
    \end{tabular}}
        \arrow[r, "\text{Quantization}"] &
        \arrow[bend left]{l}{\text{Semiclassicalization}}
    \text{\begin{tabular}{c}
        Crossed Product Algebra \\
        $=$ \\
        Trivial Quantum Principal Bundle
    \end{tabular}}
        \arrow[r, "\text{Orbifolding}"] & 
        \arrow[bend left]{l}{\text{Local Trivialization}}
    \text{\begin{tabular}{c}
        G-Framed Algebra \\
        $=$ \\
        Quantum Orbifold
    \end{tabular}}
    \end{tikzcd}
    \caption{Graphic depiction of our results.}
    \label{fig: graphic}
\end{figure}

\section{Crossed Products as Quantum Principal Bundles}

In this section we recall the notion of a quantum principal bundle as introduced in \cite{ahmad2024quantumreferenceframestopdown}. We begin by reviewing the definition of a principal bundle in the classical context and by providing a useful example in the form of the extended phase space of a physical system acted upon by a Lie group. We then demonstrate how the standard definition of a principal bundle can be translated from statements about topological spaces to statements about their algebras of functions. Promoting this translation to the status of a definition and relaxing the assumption that the algebras involved be function algebras for underlying spaces, we arrive at a fully non-commutative definition of a principal bundle. We conclude by proving a one to one correspondence between quantum principal bundles satisfying certain analytic conditions and crossed products of von Neumann algebras by locally compact groups. This proof provides a non-commutative geometric point of view on the correspondence proposed in \cite{Klinger:2023auu, Klinger:2023tgi} identifying the extended phase space on the classical side and the crossed product on the quantum side. Our aim here is two-fold: bringing in the machinery of non-commutative geometry to the physics community and promoting the fundamental status of the crossed product by viewing it as a structure as indispensable to quantum theory as phase space is to classical theory.

\subsection{Classical Principal Bundles and the Extended Phase Space} \label{sec: extended phase space} 

A principal bundle is a triple $(P,B,G)$ consisting of two manifolds, $P$ and $B$, and a Lie group, $G$. The manifold $P$ is called the total space of the principal bundle and admits a pair of structure maps. The first is a projection $\pi: P \rightarrow B$ which identifies the manifold $B$ as the base space of $P$ when regarded as a bundle. The second is a free right action $R: G \times P \rightarrow P$ which identifies $G$ in its role as structure group. Recall that $R$ is free if
\beq
	R_g(p) = p \implies g = e,
\eeq
where $e \in G$ is the identity of the group. In this section we restrict our attention to locally trivial principal bundles. This means that, for every open subset $U \subset B$, there exist a map $t_U: \pi^{-1}(U) \rightarrow G$ such that
\beq \label{trivialization condition}
	t_U \circ R_g(p) = t_U(p) g. 
\eeq
Of course, this also implies that the full principal bundle can be trivialized in each open chart via the map $\phi_{U}: \pi^{-1}(U) \rightarrow U \times G$, with $\phi_U(p) = (\pi(p), t_U(p)), \; \forall p \in \pi^{-1}(U)$. 

For our purposes, a good example of a principal bundle comes from the notion of the extended phase space. Let $(X,\Pi)$ be a Poisson manifold associated with the phase space of a physical system.\footnote{In many cases we can interpret $X = T^*Q$ as the cotangent bundle of a configuration space $Q$.} The map $\Pi: \bigwedge^{2} T^*X \rightarrow C^{\infty}(X)$ is called a Poisson tensor and satisfies the equation
\beq
	[\Pi,\Pi]_{S} = 0,
\eeq
with $[,]_{S}$ the Schouten-Nijenhuis bracket.\footnote{The Schouten-Nijenhuis generalizes the Lie bracket to multi-vector fields. We define this bracket by its action on wedge products of vectors. Let $\{\un{X}_i\}_{i = 1}^k$ and $\{\un{Y}_j\}_{j = 1}^l$ be a pair of sets of vector fields on $X$. Then
\beq
	[\un{X}_1 \wedge ... \wedge \un{X}_k, \un{Y}_1 \wedge ... \wedge \un{Y}_l]_S \equiv \sum_{i,j} (-1)^{i+j} [\un{X}_i,\un{Y}_j] \wedge \un{X}_1 \wedge ... \wedge \widehat{\un{X}_i} \wedge ... \wedge \un{X}_k \wedge \un{Y}_1 \wedge ... \wedge \widehat{\un{Y}_j} \wedge ... \wedge \un{Y}_l
\eeq}
The Poisson tensor defines a Poisson bracket on the algebra of functions $C^{\infty}(X)$ by the equation
\beq
	\{f,g\}_{\Pi} \equiv \Pi(df,dg). 
\eeq

We say that a group $G$ acts as a symmetry on $X$ if it realizes a homomorphism $a: G \rightarrow \text{Aut}(X)$. Here, $\text{Aut}(X)$ is the set of diffeomorphisms of $X$ that preserve the Poisson tensor, e.g. the set of Poisson maps from $X$ to itself. Restricting our attention to Lie groups, we can study the infinitesimal structure of the symmetry action by passing from the group to its associated Lie algebra $\mathfrak{g}$. Then, we realize a map $\xi: \mathfrak{g} \rightarrow TX$ such that
\beq \label{Infinitesimal symmetry}
	a_{\text{exp}(\un{\mu})} = \text{exp}(\xi_{\un{\mu}}), \; \forall \un{\mu} \in \mathfrak{g}.
\eeq
On the left hand side of \eqref{Infinitesimal symmetry} $\text{exp}: \mathfrak{g} \rightarrow G$ is the exponentiation of the Lie algebra to the (connected component of) the group $G$ whereas on the right hand side $\text{exp}: TX \rightarrow \text{Diff}(X)$ constructs the integral curve of a tangent vector. Since $a_g$ is a Poisson map, the associated vector field $\xi_{\un{\mu}}$ will be a Poisson vector field, meaning $\mathcal{L}_{\xi_{\un{\mu}}} \Pi = 0$ for each $\un{\mu} \in \mathfrak{g}$. 

We refer to the triple $(X,G,a)$ as a Poisson covariant system. To each such triple we can identify an enlarged Poisson manifold $X \times_a G$ which we refer to as the extended phase space \cite{Ciambelli:2021nmv, Klinger:2023qna}. The extended phase space is a bundle over $X$ based locally around the product manifold $X \times \mathfrak{g}^*$. In other words, it admits a projection $\pi: X \times_a G \rightarrow X$, and for each $U \in X$ there exists a trivialization $\phi_U: \pi^{-1}(U) \rightarrow X \times \mathfrak{g}^*$. For simplicity, we work from here on purely within such a local trivialization. 

The Poisson tensor of $X \times_a G$ is inherited from (i) the Poisson tensor on $X$, (ii) the canonical Poisson tensor on $\mathfrak{g}^*$, and (iii) the infinitesimal action of $G$ on $X$. Stated as a Poisson bracket we have, for $\mathfrak{F},\mathfrak{G} \in C^{\infty}(X \times \mathfrak{g}^*)$:
\beq \label{extended bracket}
	\{\mathfrak{F},\mathfrak{G}\}_{ext}(\alpha,x) \equiv \{\mathfrak{F}(\alpha,\cdot), \mathfrak{G}(\alpha,\cdot)\}_X(x) + \{\mathfrak{F}(\cdot,x),\mathfrak{G}(\cdot,x)\}_{\mathfrak{g}^*} - \bigg(\mathcal{L}_{\xi_{\lambda(\delta \mathfrak{F}(\alpha,x))}} \mathfrak{G}(\alpha,x) - \mathcal{L}_{\xi_{\lambda(\delta \mathfrak{G}(\alpha,x))}} \mathfrak{F}(\alpha,x)\bigg). 
\eeq
The Lie derivative appearing in \eqref{extended bracket} is that of $X$, while the symbol $\delta$ denotes the exterior derivative on $\mathfrak{g}^*$. The map $\lambda: \Omega^1(\mathfrak{g}^*) \rightarrow \mathfrak{g}$ is an isomorphism defined by recognizing that the set of one forms on $\mathfrak{g}^*$ are linear functionals on $\mathfrak{g}^*$, and hence are isomorphic to $\mathfrak{g}^{**} \simeq \mathfrak{g}$. With this map in hand, the Poisson bracket on $\mathfrak{g}^*$ is defined
\beq
	\{F,G\}_{\mathfrak{g}^*}(\alpha) \equiv \alpha\bigg([\lambda(\delta F),\lambda(\delta G)]_{\mathfrak{g}}\bigg), \; F,G \in C^{\infty}(\mathfrak{g}^*). 
\eeq 

When $\mathfrak{g}$ is finite dimensional, it is straightforward to show that $X \times_{a} G$ is locally isomorphic to $X \times G$.\footnote{This is also true in the infinite dimensional case provided $X \times_a G$ is taken to be locally modeled on $X \times \mathfrak{g}^{\musSharp}$, where $\mathfrak{g}^{\musSharp}$ is an appropriate subset of $\mathfrak{g}^*$ that is induced from a bilinear on $\mathfrak{g}$. For example, this may come from the Killing form if $\mathfrak{g}$ is semi-simple, although we don't require the bilinear to be invariant for this purpose.} In fact, $X \times_a G$ has the structure of a principal bundle with base $X$ and structure group $G$. The projection is as defined above, while the right action is inherited from the symmetry action of $G$ on $X$ and the standard right action of $G$ on itself. In particular,
\beq
	R_h(x,g) \equiv (a_{h^{-1}}(x), gh). 
\eeq	

In \cite{Klinger:2023tgi}, it was argued that the extended phase space is the classical analog of the crossed product algebra. For the benefit of the reader, we recall here the conventional definition of the crossed product which will play a central role in the remainder of the note. A von Neumann covariant system is a triple $(M,G,\alpha)$ with $M$ a von Neumann algebra and $\alpha: G \rightarrow \text{Aut}(M)$ an automorphism action on $M$ by a locally compact group $G$. To each such triple we can identify an enlarged von Neumann algebra $M \times_{\alpha} G$ called the crossed product \cite{takesaki1973duality}. 

Let $\pi: M \rightarrow B(H)$ be a representation of $M$. The algebraic structure of the crossed product is defined by means of the canonical covariant representation $(H_G \equiv L^2(G,H), \pi_{\alpha}, \lambda)$. Here $H_G$ is called the extended Hilbert space, and $\pi_{\alpha}: M \rightarrow B(H_G)$ and $\lambda: G \rightarrow U(H_G)$ are respectively a representation and unitary representation given by
\beq
	\bigg(\pi_{\alpha}(x) \xi\bigg)(g) \equiv \pi \circ \alpha_{g^{-1}}(x) \bigg(\xi(g)\bigg), \qquad \bigg(\lambda(h) \xi\bigg)(g) \equiv \xi(h^{-1} g). 
\eeq	
The triple $(H_G, \pi_{\alpha},\lambda)$ is called covariant because $\lambda$ unitarily implements the automorphism $\alpha$ as
\beq
    \pi_{\alpha} \circ \alpha_g(x) = \lambda(g) \pi_{\alpha}(x) \lambda(g^{-1}).
\eeq
The covariant representation identifies a closed algebra in $B(H_G)$ consisting of elements of the form $\pi_{\alpha}(x) \lambda(g)$ with $g \in G$ and $x \in M$. To be precise, we have
\beq
	\pi_{\alpha}(x) \pi_{\alpha}(y) = \pi_{\alpha}(xy), \qquad \lambda(g) \lambda(h) = \lambda(gh), \qquad \lambda(g) \pi_{\alpha}(x) = \pi_{\alpha} \circ \alpha_g(x) \lambda(g),
\eeq
or more succinctly
\beq \label{Quantum Poisson Bracket}
	\bigg(\pi_{\alpha}(x_1) \lambda(g_1) \bigg) \bigg(\pi_{\alpha}(x_2) \lambda(g_2)\bigg) = \pi_{\alpha}\bigg(x_1 \alpha_{g_1}(x_2)\bigg) \lambda(g_1 g_2). 
\eeq
The crossed product is defined to be the von Neumann algebra generated by this representation: $M \times_{\alpha} G \equiv \pi_{\alpha}(M) \vee \lambda(G)$. 

Eqn. \eqref{Quantum Poisson Bracket} is a quantum analog of \eqref{extended bracket}, with the operator product structure on the crossed product specified in terms of (i) the product structure on $M$ (which quantizes the Poisson bracket on $X$), (ii) the product structure on $G$ (which quantizes the canonical Poisson bracket on $\mathfrak{g}^*)$, and (iii) the action of $G$ on $M$ (which quantizes the infinitesimal action of $G$ on $X$). One of the immediate objectives of this note is to place this correspondence on firm mathematical footing.

\subsection{Quantizing the Principal Bundle} \label{sec: Intro QPB}

We would now like to `quantize' the notion of a principal bundle and in turn that of the extended phase space. In this paper, the notion of quantization we will use is one that comes to us from non-commutative geometry. Quantization in non-commutative geometry roughly follows in a sequence of two stages. First, classical geometric definitions are reformulated in algebraic terms by translating statements about spaces into statements about algebras of functions therein. Second, the resulting statements are promoted to definitions and allowed to hold even when the algebras involved are not commutative function algebras. 

To get a sense of how this procedure works, let us reformulate the definition of a classical principal bundle in this way. To do so, we simply pullback the structure and trivialization maps. In this language, a principal bundle is a triple of commutative algebras $(C^{\infty}(P), C^{\infty}(B), C^{\infty}(G))$ along with a pair of maps $\pi^*: C^{\infty}(B) \hookrightarrow C^{\infty}(P)$ and $R^*: C^{\infty}(P) \rightarrow C^{\infty}(P) \times C^{\infty}(G)$. The principal bundle is called trivial\footnote{We postpone a discussion of the meaning of `local' trivialization in the non-commutative context to Section \ref{sec: locally trivial}.} if it admits a trivialization map $t^*: C^{\infty}(G) \hookrightarrow C^{\infty}(P)$.

That $\pi$ was a projection implies that $\pi^*$ is an inclusion. To understand the algebraic translation of the freeness condition, it is useful to reformulate it in the following way. First, let us define the `multiplication' $m: P \rightarrow P \times P$, with $m(p) \equiv (p,p)$. Then, we can define the map $\Phi \equiv (id_P \times R) \circ (m \times id_G): P \times G \rightarrow P \times P$ such that
\beq \label{Freeness Map}
	\Phi(p,g) = (id_P \times R) \circ (m \times id_G)(p,g) = (id_P \times R)(p,p,g) = (p,R_g(p)). 
\eeq
From this point of view, freeness simply implies the map $\Phi$ is injective for each fixed $p$. Recalling that the pullback of an injective map is surjective, the algebraic reformulation of the freeness condition is that
\beq
	\Phi^* = (m^* \times id_G) \circ (id_P \times R^*): C^{\infty}(P) \times C^{\infty}(P) \rightarrow C^{\infty}(P) \times C^{\infty}(G)
\eeq
is surjective. To upgrade the trivialization condition it is useful to rewrite \eqref{trivialization condition} in the form
\beq \label{trivialization condition 2}
	t \circ R = r \circ t \times id_G,
\eeq
where here $r: G \times G \rightarrow G$ is the right product in the group, $r(g,h) = hg$. Pulling back \eqref{trivialization condition 2} we find
\beq \label{trivialization condition 3}
	R^* \circ t^* = (t^* \times id_G) \circ r^*.
\eeq

We can now upgrade this definition to a fully non-commutative form by elevating the spaces involved to the status of arbitrary algebras. In so doing we will also have to adjust the definitions of some of the maps accordingly. The function spaces $C^{\infty}(P)$ and $C^{\infty}(B)$ are promoted to unital algebras which we denote by $\mathcal{P}$ and $\mathcal{B}$. The pullback of the multiplication map $m(p) = (p,p)$ is interpreted as the algebra product and denoted by $m_{\mathcal{P}}$. Since $C^{\infty}(G)$ is the function algebra of a group, it must have some additional structure. It turns out that the appropriate generalization is that of a quantum group, also known as a Hopf algebra. In Appendix \ref{app: Hopf} we introduce the necessary background on Hopf algebras. We denote such an algebra by $\mathcal{G}$. The pullback of the right group multiplication is replaced by the comultiplication $\delta_{\mathcal{G}}$. Finally, the appropriate generalization of a right action to the non-commutative context is the notion of a right coaction. 

Putting everything together, we arrive at the following definition of a trivial quantum principal bundle:

\begin{definition} \label{quantum principal bundle}
	A quantum principal bundle is a triple $(\mathcal{P},\mathcal{B},\mathcal{G})$ such that $\mathcal{P}$ and $\mathcal{B}$ are unital algebras and $\mathcal{G}$ is a quantum group. $\mathcal{P}$ is a right comodule algebra of $\mathcal{G}$ admitting a right coaction $R^{\mathcal{P}}: \mathcal{P} \rightarrow \mathcal{P} \otimes \mathcal{G}$. The coaction is free in the sense that
	\beq
		(m_{\mathcal{P}} \otimes id_{\mathcal{G}}) \circ (id_{\mathcal{P}} \otimes R^{\mathcal{P}}): \mathcal{P} \otimes \mathcal{P} \rightarrow \mathcal{P} \otimes \mathcal{G}
	\eeq
	is surjective. This quantizes the condition that \eqref{Freeness Map} is injective. The algebra $\mathcal{B}$ can be identified with the fixed point subalgebra of $\mathcal{P}$ under the right coaction,
	\beq
		\mathcal{B} = \{p \in \mathcal{P} \; | \; R^{\mathcal{P}}(p) = p \otimes \mathbb{1}_{\mathcal{G}}\}.
	\eeq 
	We therefore automatically obtain an inclusion $\Pi: \mathcal{B} \hookrightarrow \mathcal{P}$. This quantizes the projection $\pi: P \rightarrow M$. Finally, the quantum principal bundle $\mathcal{P}$ is termed trivial if it admits a unital, convolution invertible map $T: \mathcal{G} \hookrightarrow \mathcal{P}$ such that
	\beq
		R^{\mathcal{P}} \circ T = (T \otimes id_{\mathcal{G}}) \circ \delta_{\mathcal{G}}.
	\eeq
	This quantizes \eqref{trivialization condition 3}.
\end{definition}

\subsection{Crossed Products are Trivial QPBs} \label{sec: CP = QPB}

We are now prepared to compare the ensuing discussion with Landstad's construction of the crossed product \cite{landstad1979duality}. According to Landstad's theorem, given a von Neumann algebra $\mathcal{P}$ and a locally compact group $G$ the algebra $\mathcal{P}$ is isomorphic to a crossed product $\mathcal{B} \times_{\alpha} G$ for some covariant system $(\mathcal{B},G,\alpha)$ if and only if there exists a continuous homomorphism $\lambda: G \rightarrow \mathcal{P}$ and a coaction $R^{\mathcal{P}} \rightarrow \mathcal{P} \otimes \mathcal{L}(G)$ such that
\beq \label{Landstad 1}
	(R^{\mathcal{P}} \otimes id_{\mathcal{L}(G)}) \circ R^{\mathcal{P}} = (id_{\mathcal{P}} \otimes \delta_G) \circ R^{\mathcal{P}}, \qquad R^{\mathcal{P}} \circ \lambda(g) = \lambda(g) \otimes \ell(g), \; \forall g \in G. 
\eeq
Here $\mathcal{L}(G)$ is the group von Neumann algebra of $G$ which is given the structure of a coalgebra by endowing it with the comultiplication $\delta_G: \mathcal{L}(G) \rightarrow \mathcal{L}(G)^{\otimes 2}$ such that $\delta_G(\ell(g)) = \ell(g) \otimes \ell(g)$. In fact, $\mathcal{L}(G)$ moreover has the structure of a Hopf algebra. Recall that, in the event that $\mathcal{P}$ is isomorphic to a crossed product, the system algebra is defined to be the fixed point subalgebra of $\mathcal{P}$ with respect to the coaction $R^{\mathcal{P}}$:
\beq
	\mathcal{B} \equiv \{p \in \mathcal{P} \; | \; R_p(p) = p \otimes \mathbb{1}_{\mathcal{L}(G)}\}. 
\eeq

The first of the two equations in \eqref{Landstad 1} identifies $\mathcal{P}$ as a right $\mathcal{L}(G)$ comodule algebra. To parse the second equation in \eqref{Landstad 1} let's reinterpret the map $\lambda: G \rightarrow \mathcal{P}$ as a map $T: \mathcal{L}(G) \hookrightarrow \mathcal{P}$ such that $T(\ell(g)) \equiv \lambda(g)$. Then we can rewrite 
\beq
	R^{\mathcal{P}} \circ T(\ell(g)) = \lambda(g) \otimes \ell(g) = (T \otimes id_{\mathcal{L}(G)}) \circ \delta_G(\ell(g)), \; \forall g \in G. 
\eeq	
As $\ell(g)$ is dense in $\mathcal{L}(G)$ this implies that
\beq
	R^{\mathcal{P}} \circ T = (T \otimes id_{\mathcal{L}(G)}) \circ \delta_G.
\eeq
Provided $T$ is convolution invertible and $T(\ell(e)) = \mathbb{1}_{\mathcal{P}}$ the second condition in \eqref{Landstad} therefore identifies $(\mathcal{P},\mathcal{B},\mathcal{L}(G))$ as a trivial quantum principal bundle. In fact, this is always the case.

We formalize the above observations in the form of the following theorem: 

\begin{theorem}[vN QPB with group structure $\iff$ vN Crossed Product]
The following are equivalent:
\begin{enumerate}
	\item $(\mathcal{P},\mathcal{B},\mathcal{G})$ is a trivial quantum principal bundle with $\mathcal{P}$ a von Neumann algebra and $\mathcal{G} = \mathcal{L}(G)$ for $G$ a locally compact group.
	\item $\mathcal{P}$ is a von Neumann algebra for which there exists a von Neumann covariant system $(\mathcal{B},G,\alpha)$ such that $\mathcal{P} \simeq \mathcal{B} \times_{\alpha} G$. The automorphism action of $G$ on $\mathcal{B}$ is given by $\alpha_g(b) = T(\ell(g)) b T(\ell(g^{-1}))$.  
\end{enumerate}
\end{theorem}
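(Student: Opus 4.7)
The plan is to recognize that this theorem is essentially a repackaging of Landstad's duality theorem \cite{landstad1979duality}, already invoked in the narrative above, and then to verify that the trivialization data $T$ of a quantum principal bundle is in bijection with the data $(\lambda, R^{\mathcal{P}})$ appearing in Landstad's hypotheses. I would therefore structure the proof as a double implication whose only substantial content is in translating between the axioms of Definition \ref{quantum principal bundle} specialized to $\mathcal{G} = \mathcal{L}(G)$ and the conditions in \eqref{Landstad 1}.

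For the direction $1 \Rightarrow 2$, I would start from the trivial QPB data $(\mathcal{P}, \mathcal{B}, \mathcal{L}(G), R^{\mathcal{P}}, T)$ and define $\lambda(g) \equiv T(\ell(g))$. The fact that $\ell: G \to \mathcal{L}(G)$ is a continuous group homomorphism, combined with the unitality and convolution invertibility of $T$, makes $\lambda$ a continuous homomorphism into $\mathcal{P}$. The first Landstad identity $(R^{\mathcal{P}} \otimes id) \circ R^{\mathcal{P}} = (id \otimes \delta_G) \circ R^{\mathcal{P}}$ is just the coassociativity clause of $\mathcal{P}$ being a right $\mathcal{L}(G)$-comodule algebra. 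The second, $R^{\mathcal{P}} \circ \lambda(g) = \lambda(g) \otimes \ell(g)$, is obtained by evaluating the equivariance axiom $R^{\mathcal{P}} \circ T = (T \otimes id) \circ \delta_G$ on $\ell(g)$ and using $\delta_G(\ell(g)) = \ell(g) \otimes \ell(g)$. Landstad's theorem then supplies a covariant system $(\mathcal{B}, G, \alpha)$ with $\mathcal{P} \simeq \mathcal{B} \times_\alpha G$, and identifies $\mathcal{B}$ as the fixed-point subalgebra. That the concrete automorphism action is $\alpha_g(b) = T(\ell(g)) b T(\ell(g^{-1}))$ follows by applying $R^{\mathcal{P}}$ to $\lambda(g) b \lambda(g^{-1})$ for $b \in \mathcal{B}$ and verifying that the result lies again in the fixed-point subalgebra, since
\beq
R^{\mathcal{P}}\big(\lambda(g) b \lambda(g^{-1})\big) = \big(\lambda(g) \otimes \ell(g)\big)\big(b \otimes \mathbb{1}\big)\big(\lambda(g^{-1}) \otimes \ell(g^{-1})\big) = \lambda(g) b \lambda(g^{-1}) \otimes \mathbb{1}.
\eeq

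For the direction $2 \Rightarrow 1$, I would take $R^{\mathcal{P}}$ to be the canonical dual coaction on the crossed product, determined by $R^{\mathcal{P}}(\pi_\alpha(b)) = \pi_\alpha(b) \otimes \mathbb{1}$ and $R^{\mathcal{P}}(\lambda(g)) = \lambda(g) \otimes \ell(g)$, and define $T: \mathcal{L}(G) \to \mathcal{P}$ by $T(\ell(g)) = \lambda(g)$, extended by weak-$*$ continuity. Unitality is immediate from $\lambda(e) = \mathbb{1}_{\mathcal{P}}$, and convolution invertibility is witnessed by $T^{-1}(\ell(g)) = \lambda(g^{-1})$. Verifying the equivariance condition for $T$ reduces to checking it on generators $\ell(g)$, where it is built in. The remaining nontrivial axiom is freeness of the coaction; I would establish surjectivity of the Galois map by the explicit formula
\beq
p \otimes \ell(g) = (p\, T(\ell(g^{-1})) \otimes \mathbb{1})\, R^{\mathcal{P}}(T(\ell(g))),
\eeq
which shows every elementary tensor lies in the image, and then use density of the span of $\{\ell(g) : g \in G\}$ in $\mathcal{L}(G)$ together with ultraweak closure to conclude. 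Identifying $\mathcal{B}$ as the fixed-point subalgebra of $R^{\mathcal{P}}$ matches the $\mathcal{B}$ of the crossed product, completing this direction.

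The main obstacle I anticipate is not conceptual but analytic: throughout the argument I have identified $T$ by its values on the dense set $\{\ell(g) : g \in G\} \subset \mathcal{L}(G)$, and I must carefully extend statements such as equivariance, unitality, and convolution invertibility from this generating set to all of $\mathcal{L}(G)$ in an ultraweakly continuous fashion. This is where the local compactness of $G$ and the von Neumann structure of $\mathcal{P}$ are essential: both ensure that $\lambda(G)$ generates $T(\mathcal{L}(G))$ in the appropriate sense, so that the algebraic relations established on generators propagate to the whole Hopf--von Neumann algebra. Once these continuity issues are handled, both implications follow by direct appeal to Landstad's theorem, so the theorem should really be read as a dictionary entry: the geometric notion of trivialization of a QPB is the algebraic manifestation of the existence of the implementing homomorphism $\lambda$ in Landstad's characterization of crossed products.
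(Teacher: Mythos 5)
Your proof takes essentially the same route as the paper's: both directions are a translation between the trivialization data $(R^{\mathcal{P}},T)$ of the QPB and the pair $(\lambda,R^{\mathcal{P}})$ in Landstad's theorem, with $\lambda(g)=T(\ell(g))$, and all the substantive content is delegated to that theorem. The one place you go beyond the paper is the explicit check of freeness of the coaction in the direction $2\Rightarrow 1$ via $p\otimes\ell(g)=\bigl(p\,T(\ell(g^{-1}))\otimes\mathbb{1}\bigr)R^{\mathcal{P}}(T(\ell(g)))$; the paper's proof verifies unitality, convolution invertibility, and the comodule property but leaves that axiom of Definition \ref{quantum principal bundle} unaddressed, so your addition is a genuine (and welcome) completion rather than a different approach.
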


\begin{proof}
	The implication $1 \implies 2$ has been discussed above. $\mathcal{P}$ is a right comodule algebra of $\mathcal{L}(G)$ which implies that it admits a right coaction, $R^{\mathcal{P}}$, of $\mathcal{L}(G)$ satisfying the conditions of Landstad's theorem. At the same time the trivialization map $T: \mathcal{L}(G) \hookrightarrow \mathcal{P}$ can be used to define a homomorphism $\lambda: G \rightarrow \mathcal{P}$ by $\lambda(g) \equiv T(\ell(g))$. The right equivariance of $T$ under the coaction $R^{\mathcal{P}}$ implies that the homomorphism $\lambda(g)$ satisfies the condition $R^{\mathcal{P}}(\lambda(g)) = \lambda(g) \otimes \ell(g)$ for each $g \in G$. Thus, both of the conditions of Landstad's theorem are satisfied and we conclude that there exists a covariant system $(\mathcal{B},G,\alpha)$ such that $\mathcal{P} \simeq \mathcal{B} \times_{\alpha} G$. Here, $\mathcal{B} = \{b \in \mathcal{P} \: | \; R^{\mathcal{P}}(p) = p \otimes \mathbb{1}_{\mathcal{L}(G)}\}$ and $\alpha_g = \text{Ad}_{\lambda(g)} = \text{Ad}_{T(\ell(g))}$. 
	
	Let us now consider the opposite direction, $2 \implies 1$. Suppose that $\mathcal{P}$ is a von Neumann algebra which is isomorphic to the crossed product $\mathcal{B} \times_{\alpha} G$ induced by some covariant system $(\mathcal{B},G,\alpha)$. By Landstad's theorem this implies the existence of maps $R^{\mathcal{P}}: \mathcal{P} \rightarrow \mathcal{P} \otimes \mathcal{L}(G)$ and $\lambda: G \rightarrow \mathcal{P}$, both of which are homomorphisms, satisfying
	\beq \label{Landstad}
		R^{\mathcal{P}}(\lambda(g)) = \lambda(g) \otimes \ell(g) \; \forall g \in G, \qquad (R^{\mathcal{P}} \otimes id_{\mathcal{L}(G)}) \circ R^{\mathcal{P}} = (id_{\mathcal{P}} \otimes \delta_G) \circ R^{\mathcal{P}}. 
	\eeq	
	In eqn. \eqref{Landstad}, $\delta_G(\ell(g)) \equiv \ell(g) \otimes \ell(g)$ defines a comultiplication on $\mathcal{L}(G)$. The algebra $\mathcal{L}(G)$ can moreover be completed to a Hopf algebra by endowing it with the counit $\epsilon_G: \mathcal{L}(G) \rightarrow \mathbb{C}$ by virtue of the Plancherel weight. That is
	\beq
		\epsilon_G(f) = f(e), \; \epsilon_{G}(\ell(g)) = 1, \; f \in C_0(G), g \in G. 
	\eeq
	Finally, Landstad identifies the algebra $\mathcal{B} \equiv \{b \in \mathcal{P} \; | \; R^{\mathcal{P}}(b) = b \otimes \mathbb{1} \}$ and the automorphism $\alpha_g = \text{Ad}_{\lambda(g)}$. 
	
	By the isomorphism $\mathcal{P} \simeq \mathcal{B} \times_{\alpha} G$, the set of products $b \lambda(g)$ is dense in $\mathcal{P}$. Eqn. \eqref{Landstad} implies that
	\beq
		(id_{\mathcal{P}} \otimes \epsilon_G) \circ R^{\mathcal{P}}(b \lambda(g)) = id_{\mathcal{P}}(b \lambda(g)) \otimes \epsilon_G(\ell(g)) = b \lambda(g),
	\eeq
	which demonstrates that $R^{\mathcal{P}}$ is a right coaction and therefore $\mathcal{P}$ is a right $\mathcal{L}(G)$ comodule algebra. We can also define the map $T: \mathcal{L}(G) \hookrightarrow \mathcal{P}$ by $T(\ell(g)) \equiv \lambda(g)$ (and extended to the rest of $\mathcal{L}(G)$ by closure). The map $T$ is clearly unital
	\beq
		T(\mathbb{1}_{\mathcal{L}(G)}) = T(\ell(e)) = \lambda(e) = \mathbb{1}_P.  
	\eeq
	It is also convolution invertible; its convolutional inverse is given by $T^{-1}: \mathcal{L}(G) \rightarrow \mathcal{P}$ with $T^{-1}(\ell(g)) \equiv \lambda(g)^{-1}$. Then
	\beq \label{Convolutional Invertibility}
		T \star T^{-1}(\ell(g)) = m_{\mathcal{P}} \circ (T \otimes T^{-1}) \circ \delta_G(\ell(g)) = T(\ell(g))T^{-1}(\ell(g)) = \lambda(g) \lambda(g)^{-1} = \mathbb{1}_{\mathcal{P}}, \; \forall g \in G.  
	\eeq	
	Since $\mathcal{P}$ is a von Neumann algebra $\eta_{\mathcal{P}}(z) = z \mathbb{1}_P$ and thus the convolutional identity $\eta_{\mathcal{P}} \circ \epsilon_G(\ell(g)) = \eta_{\mathcal{P}}(1) = \mathbb{1}_{\mathcal{P}}$ for each $g$. Therefore eqn. \eqref{Convolutional Invertibility} can be rewritten as
	\beq
		T \star T^{-1} = \eta_{\mathcal{P}} \circ \epsilon_G,
	\eeq
	which confirms that $T^{-1}$ is the convolutional inverse of $T$. We have therefore shown that any von Neumann algebra $\mathcal{P}$ that is isomorphic to a crossed product admits a coaction $R^{\mathcal{P}}$ and a trivialization map $T$ and may therefore be regarded as a trivial quantum principal bundle. The base algebra of this quantum principal bundle is the system algebra $\mathcal{B}$. 
\end{proof}

\section{Quantum Orbifolds and Quantum Reference Frames} \label{sec: Relational QPBs}

We have now demonstrated that von Neumann crossed product algebras are quantum principal bundles. In \cite{ahmad2024quantumreferenceframestopdown}, we illustrated how crossed product algebras could be identified with quantum reference frames. In this section, we will explore how these two points of view meld together to create a non-commutative geometry for quantum reference frames. In particular, we will explore the non-commutative analog of gauge transformations and show that they are reference frame preserving maps. This provides a powerful new tool for mapping and distinguishing between quantum reference frames. To this end, we work out the consequences of a quantum gauge transformation on the constraints encountered by a particular observer. Finally, we describe the non-commutative notion of local trivialization, and comment on its relationship with the quantum relativity of reference frames. 

\subsection{Quantum Gauge Transformations} \label{sec: QGT}

In the case of the quantum gauge transformation it is more transparent to start with the general definition and demonstrate its classical limit than the other way around.

\begin{definition}[Quantum gauge transformation] \label{gauge transformation}
	Let $(\mathcal{P},\mathcal{B},\mathcal{G})$ be a trivial quantum principal bundle with trivialization $T: \mathcal{G} \hookrightarrow \mathcal{P}$. A quantum gauge transformation is a unital, convolution invertible map $\Gamma: \mathcal{G} \rightarrow \mathcal{B}$ which acts upon the trivialization in the following way:
	\beq
		T \mapsto T^{\Gamma} \equiv \bigg(\Pi \circ \Gamma\bigg) \star T. 
	\eeq
	Here, $\star$ is the convolutional product in $\mathcal{L}(\mathcal{G},\mathcal{B})$. 
\end{definition}

In the classical case, we have $T = t^*$ for a trivialization map $t: P \rightarrow G$. The quantum gauge transformation can be interpreted as a section $\gamma: B \rightarrow G$ giving rise to the map $\Gamma = \gamma^*: C^{\infty}(G) \rightarrow C^{\infty}(B)$. Identifying $C^{\infty}(G) \otimes C^{\infty}(G) \simeq C^{\infty}(G \times G)$, the multiplication and comultiplication on $C^{\infty}(G)$ are defined respectively by
\beq
	m_{C^{\infty}(G)}(F_1)(g) \equiv F_1(g,g), \qquad \delta_{C^{\infty}(G)}(F_2)(g,h) \equiv F_2(gh), \; F_1 \in C^{\infty}(G \times G), F_2 \in C^{\infty}(G).
\eeq
Given $F \in C^{\infty}(G)$ we therefore obtain
\begin{flalign}
	T^{\Gamma}(F)\rvert_{p} &= \bigg(\pi^* \circ \gamma^*\bigg) \star t^*(F)\rvert_p \nonumber \\
	&= m_{C^{\infty}(G)} \circ \bigg((\gamma \circ \pi)^* \otimes t^*\bigg) \circ \delta_{C^{\infty}(G)}(F)\rvert_p \nonumber \\
	&= m_{C^{\infty}(G)} \bigg(\delta_{C^{\infty}(G)}(F)(\gamma \circ \pi(p), t(p))\bigg) \nonumber \\
	&= F\bigg(\gamma \circ \pi(p) t(p)\bigg) = \bigg(t_{\gamma}^*F\bigg)(p). 
\end{flalign}
In the last line we have identified $T^{\Gamma} = t_{\gamma}^*$ where $t_{\gamma}: P \rightarrow G$ is the gauge transformed trivialization $t_{\gamma}(p) \equiv \gamma \circ \pi(p) t(p)$. Of course, this is precisely the standard formula for the gauge transformation as implemented by the left action of the group on the trivialization. 

Our main interest in this section is introducing a relational interpretation for the quantum gauge transformation. In \cite{ahmad2024quantumreferenceframestopdown}, we defined a quantum reference frame as a von Neumann covariant system $(\mathcal{B},G,\alpha)$ giving rise to a crossed product algebra $\mathcal{P} \equiv \mathcal{B} \times_{\alpha} G$. As we have now established, $(\mathcal{P},\mathcal{B},\mathcal{L}(G))$ defines a trivial quantum principal bundle. Moreover, the group action is related to the trivialization map by the equation $\alpha_g \equiv \text{Ad}_{T(\ell(g))}$. Thus, in this context, a quantum gauge transformation changes the automorphism action:
\beq
	\alpha \mapsto \alpha^{\Gamma}, \qquad \alpha^{\Gamma}_g \equiv \text{Ad}_{T^{\Gamma}(\ell(g))}. 
\eeq
A priori, this change gives rise to a distinct covariant system $(\mathcal{B},G,\alpha^{\Gamma})$ and by extension a new quantum reference frame. However, as we will now demonstrate, a quantum gauge transformation can be interpreted as an inner conjugation of $\alpha$. Thus, it leaves both the covariant system and its associated crossed product algebra invariant. In other words, a quantum gauge transformation is precisely a QRF preserving map between crossed product algebras. 

Let $\pi: \mathcal{B} \rightarrow B(H)$ be a representation of $\mathcal{B}$, and denote by $(H_G, \lambda, \pi_{\alpha})$ the canonical covariant representation of the covariant system $(\mathcal{B},G,\alpha)$. The crossed product is given by the von Neumann union of these representations: $\mathcal{P} \equiv \pi_{\alpha}(\mathcal{B}) \vee \lambda(G)$. The map $\pi_{\alpha}: \mathcal{B} \hookrightarrow \mathcal{P}$ is an inclusion which should be interpreted as the `projection' associated with $\mathcal{P}$ when regarded as a trivial quantum principal bundle. Similarly, the map $\lambda$ gives rise to the trivialization $T: \mathcal{L}(G) \rightarrow \mathcal{P}$ given by $T(\ell(g)) \equiv \lambda(g)$. In this context, a quantum gauge transformation is a unital, convolution invertible map $\Gamma: \mathcal{L}(G) \rightarrow \mathcal{B}$. On $\mathcal{L}(G)$, the multiplication $m_{\mathcal{L}(G)}: \mathcal{L}(G) \otimes \mathcal{L}(G)$ and comultiplication $\delta_{\mathcal{L}(G)}: \mathcal{L}(G) \rightarrow \mathcal{L}(G) \otimes \mathcal{L}(G)$ are given respectively by
\beq
	m_{\mathcal{L}(G)}(\ell(g) \otimes \ell(h)) \equiv \ell(gh), \qquad \delta_{\mathcal{L}(G)}(\ell(g)) \equiv \ell(g) \otimes \ell(g), \; g,h \in G. 
\eeq	
Thus, for any $f_1,f_2 \in \mathcal{L}(\mathcal{L}(G),\mathcal{B})$ the convolutional product is simply the pointwise multiplication
\beq
	\bigg(f_1 \star f_2\bigg)(\ell(g)) \equiv m_{\mathcal{L}(G)} \circ (f_1 \otimes f_2) \circ \delta_{\mathcal{L}(G)}(\ell(g)) = f_1(\ell(g)) f_2(\ell(g)). 
\eeq
In light of this fact, a gauge transformation is equivalent to a map $v: G \rightarrow \mathcal{B}$ such that $\Gamma(\ell(g)) \equiv v(g)$, each $v(g)$ is invertible, and $v(e) = \mathbb{1}$.

Applying the gauge transformation directly to the trivialization $T$ we find
\beq
	T^{\Gamma}(\ell(g)) = \bigg(\pi_{\alpha} \circ \Gamma \bigg) \star T(\ell(g)) = \pi_{\alpha} \circ \Gamma(\ell(g)) T(\ell(g)) = \pi_{\alpha}(v(g)) \lambda(g). 
\eeq
In other words, the effect of the gauge transformation is to deform the covariant representation
\beq
	\lambda \mapsto \lambda^{\Gamma}, \; \lambda^{\Gamma}(g) = \pi_{\alpha}(v(g)) \lambda(g). 
\eeq
Notice that $\lambda^{\Gamma}(g)$ is constructed from the product of elements in $\pi_{\alpha}(\mathcal{B})$ and $\lambda(G)$, and therefore lives inside of $\mathcal{P}$. Indeed, the algebra generated by $\pi_{\alpha}$ and the gauge transformed group representation is isomorphic to the original crossed product: $\pi_{\alpha}(\mathcal{B}) \vee \lambda^{\Gamma}(G) \simeq \mathcal{P}$. More to the point, the gauge transformed automorphism is given by
\beq
	\alpha^{\Gamma}_g = \text{Ad}_{\lambda^{\Gamma}(g)} = \text{Ad}_{\pi_{\alpha}(v(g))} \circ \alpha_g,
\eeq
which is merely a conjugation of the original automorphism by elements inside of $\mathcal{B}$. This implies that $\alpha^{\Gamma}$ and $\alpha$ belong to the same equivalence class of outer automorphisms. An immediate corollary of this fact is that $(\mathcal{B},G,\alpha)$ and $(\mathcal{B},G,\alpha^{\Gamma})$ are equivalent as covariant systems, and thus as quantum reference frames. This suggests an intimate relation between cohomology and QRF transformations. 

\subsection{Constraint Quantization} \label{sec: constraint quant}

To clarify its relational interpretation, let us consider the quantum gauge transformation in the context of a simple constraint quantization problem. Recall that by the commutation theorem~\cite{van1978continuous}, any crossed product $\mathcal{P} \equiv \mathcal{B} \times_{\alpha} G$ may equivalently be viewed as the fixed-point subalgebra of $\mathcal{B} \otimes B(L^2(G))$ under an extended automorphism $\hat{\alpha} = \alpha \otimes \text{Ad}_{r}$:
\begin{equation} \label{Commutation theorem}
    \mathcal{P} = \mathcal{B} \times_{\alpha} G = \left( \mathcal{B} \otimes B(L^{2}(G))\right)^{\alpha \otimes \text{Ad}_{r}}.
\end{equation}
Here $r: G \rightarrow U(L^2(G))$ is the right regular representation. 

For illustration purposes, assume the group is one-dimensional and so has a single generator $\hat{C}_{G}$. Let $\pi: \mathcal{B} \rightarrow B(H)$ be a representation. Then, there exists an operator $\hat{C}_{\mathcal{B}} \in B(H)$ generating the action of $\alpha$ such that \eqref{Commutation theorem} may be interpreted as a constraint quantization of the system $\mathcal{B}$ by appending a quantum reference frame whose Hilbert space is $L^{2}(G)$ along with the total constraint
\begin{equation} \label{constraint}
    \hat{C} \equiv \hat{C}_{\mathcal{B}} + \hat{C}_{G}. 
\end{equation}
The commutation theorem tells us that the crossed product algebra is the commutant of $\hat{C}$ on the Hilbert space $H \otimes L^{2}(G)$.\footnote{We should emphasize that our approach to constraint implementation ensues at the operator algebraic level, as opposed to the Hilbert space level. This is relevant, for example, to projective representations as appearing in \cite{Ciambelli:2024swv}. }

Let $(H_G \equiv H \otimes L^2(G), \pi_{\alpha}, \lambda)$ be the canonical covariant representation of the crossed product $\mathcal{P}$. Suppose that we choose an operator $b \in \mathcal{B}$ such that $T_{b} \equiv \pi_{\alpha}(b) \in \mathcal{P}$ is invertible. Then, we can conjugate the above constraint with this operator to obtain
\begin{equation}
    \hat{C}_{b} \equiv T_{b} \hat{C} T_{b}^{-1}.
\end{equation}
Now, let us consider the impact of conjugating the constraint on the definition of the crossed product. Suppose that $A \in B(H_G)$ is an operator which commutes with the conjugated constraint:
\beq
	0 = [A,\hat{C}_b] = A \hat{C}_b - \hat{C}_b A. 
\eeq
Expanding this relation and using the invertibility of $T_b$ we find:
\beq
	[T_b^{-1} A T_b, \hat{C}] = 0. 
\eeq
That is, if $A$ belongs to the crossed product defined by the conjugated constraints, then $A_b \equiv T_b^{-1} A T_b$ belongs to the original crossed product. Since $T_b$ is assumed invertible, this transformation is one to one and thus we conclude that the gauge-invariant algebras defined by either $\hat{C}$ or $\hat{C}_b$ are in fact isomorphic. 

This is the physical manifestation of the observation $\hat{C}_{b}$ is an inner conjugate of $\hat{C}$. Indeed, this is just the realization that the crossed product $\mathcal{P}$ is only sensitive to the outer part of the action $\alpha$. The conjugation by an invertible element does not affect the gauge-invariant physics, but it does affect the \textit{presentation} of the constraint quantization problem. Naturally, we interpret such transformations of a dynamical system that change the constraints but not the physics as gauge transformations. Since, the frame representation itself is augmented by an element of the original system $\mathcal{B}$ we can also view them as frame transformations. In summary, innerly conjugating the system and frame representations appropriately does not change the physics but makes the dynamical system appear to have a new frame attached to it.

\subsection{Locally Trivial Quantum Principal Bundles and G-Framed Algebras} \label{sec: locally trivial}

Up to this point we have concentrated our discussion on trivial quantum principal bundles. An important lingering question is therefore what it means to have a locally trivial principal bundle in the non-commutative context. As we shall now argue, local trivialization corresponds to the existence of multiple distinct quantum reference frames. From this point of view, quantum gauge transformations play the natural role of quantifying change of frame data relating isomorphic or partially isomorphic QRFs. Overall, the structure of a locally trivial quantum principal bundle very closely mirrors that of a $G$-framed algebra, as introduced in \cite{ahmad2024quantumreferenceframestopdown}. In fact, the G-framed algebra is a more general structure which we term a quantum orbifold. In this way, the insights of our previous work are given a formal mathematical language in the arena of non-commutative geometry. 

We start by reviewing what is referred to as a locally trivial quantum principal bundle in the literature~\cite{Brzezinski:1992hda, budzynski1994quantumprincipalfiberbundles, Calow_2002}. 
\begin{definition}[Locally Trivial Quantum Principal Bundle]
    A locally trivial quantum principal bundle is a triple $(\mathcal{P}, \mathcal{B}, \mathcal{G})$ consisting of unital algebras $\mathcal{P}$ and $\mathcal{B}$, and quantum group $\mathcal{G}$, along with a collection of algebras $\mathcal{A}_{\mathcal{P}} \equiv \{\mathcal{P}_i\}_{i \in \mathcal{I}}$ indexed by a partially ordered set $\mathcal{I}$. The set $\mathcal{A}_{\mathcal{P}}$ can be thought of as an atlas for the algebra $\mathcal{P}$ in the sense that, for each $i \geq j$, there exists a restriction map $r_{ij}: \mathcal{P}_i \rightarrow \mathcal{P}_j$. In particular, there is always such a restriction map from $\mathcal{P}$ to any $\mathcal{P}_i$, which we denote by $r_i: \mathcal{P} \rightarrow \mathcal{P}_j$. Each $\mathcal{P}_i$ is moreover a $\mathcal{G}$-comodule algebra with free right action $R^{\mathcal{P}_i}: \mathcal{P}_i \rightarrow \mathcal{P}_i \otimes \mathcal{G}$, so that we can define $\mathcal{B}_i \equiv \{b_i \in \mathcal{P}_i \; | R^{\mathcal{P}_i}(b_i) = b_i \otimes \mathbb{1} \}$, and the collection $\mathcal{A}_{\mathcal{B}} \equiv \{\mathcal{B}_i\}_{\mathcal{I}}$ is an atlas for $\mathcal{B}$. We denote by $\Pi_i: \mathcal{B}_i \hookrightarrow \mathcal{P}_i$ the inclusion $\mathcal{B}_i \subset \mathcal{P}_i$. In each `chart' $i$ there exists a trivialization map $T_i: \mathcal{G} \rightarrow \mathcal{P}_i$ so that $(\mathcal{P}_i, \mathcal{B}_i, \mathcal{G})$ has the structure of a trivial quantum principal bundle. Finally, for every $i \geq j$ there exists a quantum gauge transformation $\Gamma_{ij}: \mathcal{G} \rightarrow \mathcal{B}_{ij}$ such that $T_i = \bigg(\Pi_{ij} \circ \Gamma_{ij}\bigg) \star T_j$, and $\Gamma_{ij}$ satisfy a triple overlap condition $\Gamma_{ij} \star \Gamma_{jk} = \Gamma_{ik}$ for all $i \geq j \geq k$. 
\end{definition}

While the above definition seems abstruse, the reader is instructed to keep in mind that the basic idea is a stitching together of several trivial quantum principal bundles per Definition~\eqref{quantum principal bundle} via quantum gauge transformations per Definition~\eqref{gauge transformation}. Given that each trivial quantum principal bundle in our context is a crossed product associated to a dynamical system $(\mathcal{B}_{i}, G, \alpha_{i})$, we see that the above structure realizes a superstructure $\mathcal{P}$ covered by several dynamical systems which we interpreted in our previous work as quantum reference frames~\cite{ahmad2024quantumreferenceframestopdown}.  

The issue with the above definition in the relational context is the partial order $i \leq j$. This seems perfectly natural from the perspective of quantizing classical geometry, where the elements of the cover are included in each other via simple subset inclusion. However, in our case:
\begin{enumerate}
    \item There is in general no way to compare the degrees of freedom housed in inequivalent quantum reference frames and so one does not expect a partial order.
    \item Relatedly, one cannot consistently take products between operators as seen from inequivalent quantum reference frames and so one does not expect a product.
\end{enumerate}
The above two complaints are very physical in nature: there could be two frames which completely do not overlap and so access disjoint degrees of freedom of the system of interest. As a corollary, the system algebras $\mathcal{B}_{i}$ associated to each of these frames will not contain any isomorphic subalgebras and given the Gribov ambiguity, may even have different symmetries. Thus, to accurately model the superstructure which subsumes multiple inequivalent QRFs (with potential overlaps glued by QRF transformations) one is really in search of a quantum orbifold rather than a bundle. These considerations motivate the following definition.
\begin{definition}[Quantum Orbifold]
	A quantum orbifold is a triple $(\mathcal{P}, \mathcal{B}, \mathcal{G})$ consisting of unital algebras $\mathcal{P}$ and $\mathcal{B}$, and quantum group $\mathcal{G}$, along with a set $\mathcal{I}$ indexing collections of algebras $\mathcal{A}_{\mathcal{P}} \equiv \{\mathcal{P}_i\}_{i \in \mathcal{I}}$, and  quantum groups $\mathcal{A}_{\mathcal{G}} \equiv \{\mathcal{G}_i\}_{i \in \mathcal{I}}$, together with a subset $\mathcal{I}^{(2)} \subset \mathcal{I} \times \mathcal{I}$ called the intersection. The collections $\mathcal{A}_{\mathcal{P}}$ and $\mathcal{A}_{\mathcal{G}}$ play the role of atlases for $\mathcal{P}$ and $\mathcal{G}$, respectively, with the caveat that there exist restriction maps $r^{\mathcal{P}}_{ij}: \mathcal{P}_i \rightarrow \mathcal{P}_j$ if and only if $(i,j) \in \mathcal{I}^{(2)}$ (and likewise for $\mathcal{G}$). In this case we denote by $\mathcal{P}_{ij}$ the `intersection' of $\mathcal{P}_i$ and $\mathcal{P}_j$; e.g. for each $p_{ij} \in \mathcal{P}_{ij}$ there exist $p_i \in \mathcal{P}_i$ and $p_j \in \mathcal{P}_j$ such that $r^{\mathcal{P}}_{i,ij}(p_i) = r^{\mathcal{P}}_{j,ij}(p_j) = p_{ij}$. For each $i$ we require that $\mathcal{P}_i$ is a right $\mathcal{G}_i$-comodule algebra with free right action $R^{\mathcal{P}_i}: \mathcal{P}_i \rightarrow \mathcal{P}_i \otimes \mathcal{G}_i$. We denote by $\mathcal{B}_i \equiv \{b_i \in \mathcal{P}_i \; | \; R^{\mathcal{P}_i}(b_i) = b_i \otimes \mathbb{1}\}$, and $\Pi_i: \mathcal{B}_i \hookrightarrow \mathcal{P}_i$ the associated inclusion, so that $\mathcal{A}_{\mathcal{B}} \equiv \{\mathcal{B}_i\}_{i \in \mathcal{I}}$ is an atlas for $\mathcal{B}$. Each chart $i$ admits a trivialization map $T_i: \mathcal{G}_i \rightarrow \mathcal{P}_i$, so that $(\mathcal{P}_i, \mathcal{B}_i, \mathcal{G}_i)$ has the structure of a trivial quantum principal bundle. Finally, for every $(i,j) \in \mathcal{I}^{(2)}$ there exists a quantum gauge transformation $\Gamma_{ij}: \mathcal{G}_{ij} \rightarrow \mathcal{B}_{ij}$ such that $T_i = \bigg(\Pi_{ij} \circ \Gamma_{ij}\bigg) \star T_j$ when restricted to `overlapping' elements, and $\Gamma_{ij}$ satisfy a triple overlap condition $\Gamma_{ij} \star \Gamma_{jk} = \Gamma_{ik}$ for all $(i,j), (j,k) \in \mathcal{I}^{(2)}$.    
\end{definition}

In words, a quantum orbifold corresponds to an object like a locally trivial quantum principal bundle without further assumptions on the indexing $i \in \mathcal{I}$ beyond the existence of a relation between charts we refer to as intersection. This resolves both complaints above and allows us to realize that the G-framed algebra introduced in previous work is simply the quantum analogue of an orbifold, in the case where all of the $\mathcal{P}_i$ are von Neumann and each $\mathcal{G}_i = \mathcal{L}(G_i)$ for some locally compact group $G_i$~\cite{ahmad2024quantumreferenceframestopdown}. Just as the configuration space of a gauge theory is an orbifold~\cite{vanrietvelde_switching_2021}, the quantization of its phase space is a quantum version of an orbifold. This is further evidence for the conjecture that the crossed product (which is a trivial quantum principal bundle) is the quantization of the (local) extended phase space which is a classical principal bundle. In defining the G-framed algebra, we have simply elevated this idea to a global statement. 

In our original work, the G-framed algebra was defined through a quotienting procedure where one identifies overlapping degrees of freedom in different QRFs. Indeed, this structure persists for the quantum orbifold since it may be the case that some QRFs share degrees of freedom signaling a semi-product structure on the index set. In the above notation, for $(i,j) \in \mathcal{I}^{(2)}$ the $i$-th and $j$-th QRFs contain an isomorphic subalgebra $\mathcal{P}_{ij}$ with base $\mathcal{B}_{ij}$. In that case, the consistency between the two different frames is implemented by the quantum gauge transformation $\Gamma_{ij}: \mathcal{G}_{ij} \rightarrow \mathcal{B}_{ij}$. As we have discussed above, this can be interpreted as specifying that the crossed products $\mathcal{P}_i$ and $\mathcal{P}_j$ are isomorphic when restricted to their intersection set $\mathcal{P}_{ij}$.

\section{Discussion}

One of the major themes of this work, and of the recent program of research produced by the authors \cite{ahmad2024quantumreferenceframestopdown,Klinger:2023qna,Jia:2023tki,Klinger:2023auu,Klinger:2023tgi,AliAhmad:2023etg,AliAhmad:2024eun}, is emphasizing the relationship between geometry and algebra. In this note, we have made this theme manifest by recasting our previous analyses of constrained physical systems entirely within the language of non-commutative geometry. From this point of view, the extended phase space is realized as a \emph{classical principal bundle}, the crossed product is realized as a \emph{trivial quantum principal bundle}, and the $G$-framed algebra is realized as a \emph{quantum orbifold}. While it is true that the non-commutative nature of the latter objects complicates many of the details concerning their geometry, contextualizing these algebraic structures by analogy to more familiar geometric spaces allows for the intuition of classical geometric analysis to be carried over into the quantum realm. 

In this note, we have exemplified the benefits of an approach which highlights the synergy between geometry and algebra in a variety of different ways. We first provided a rigorous definition of the extended phase space starting from Poisson geometry. This definition identifies the extended phase space as a Poisson manifold which can be functorially associated with a given Poisson covariant system, and as having the structure of a classical principal bundle. We then reviewed a conjecture that the extended phase space is a classical analog of the crossed product algebra~\cite{Klinger:2023auu,Klinger:2023tgi}. We substantiated this claim concretely by identifying the crossed product as a quantized principal bundle. The classical analysis of the extended phase space which we presented was based around a local chart and did not reckon with issues of global topology. In a similar vein, we proved that the crossed product is globally \textit{trivial} as quantum principal bundle.

Identifying crossed product algebras as trivial quantum principal bundles provides a new lens for unpacking the claim of \cite{ahmad2024quantumreferenceframestopdown} that a single crossed product algebra cannot admit multiple QRFs. This is supplemented by the characterization of quantum gauge transformations as inner conjugations of group automorphism actions. The global triviality of the crossed product implies that it admits only a single equivalence class of von Neumann covariant systems. When applied directly to the problem of constraint quantization, this allowed us to demonstrate that quantum gauge transformations implement an isomorphism at the level of gauge-invariant observables. 

Moving to the locally trivial quantum principal bundle, the global triviality of the crossed product identifies it as one chart amongst many making up a single, topologically non-trivial superstructure embedding all of the physical observables of a particular gauge invariant system. Moreover, we provided a novel definition of a quantum orbifold, which was motivated by clear physical and relational considerations. It is known that the configuration space of a constrained system is generically an orbifold since some constraints degenerate in local charts~\cite{1994CMaPh.160..431M}. The presence of inequivalent trivializations of the corresponding phase space necessitates inequivalent or orthogonal dressings for the same theory, which translate into the existence of inequivalent reference frames in the quantum context. The proposed definition of a quantum orbifold grounds our previous construction~\cite{ahmad2024quantumreferenceframestopdown} in the mature mathematical field of quantum geometry~\cite{Majid:1993re}. 

The non-commutative geometry of quantum orbifolds allows us to conceive of the algebraic structure underpinning a gauge theory in very much the same terms we might think of a geometric space equipped with an atlas of local charts. Change of frame maps have been considered in the QRF literature, and in this work we see that (1) changes in the \textit{internal} frame of a given dynamical system are simply quantum gauge transformations from the perspective of the quantum principal bundle and (2) the data of the quantum orbifold allows us to change QRFs on overlaps by a quantization of the classical notion of transition functions on a principal bundle. 

Hopefully, the preceding discussion has served to underscore how the geometric identification of the crossed product and its global generalizations motivates novel algebraic and physical insights. We see multiple avenues to explore inspired by the analysis presented in this note. We would like to list just a handful of these in the remainder of this section:

\begin{enumerate}
    \item It has been argued that the Gribov ambiguity of constrained systems amounts to inequivalent quantum reductions of the same theory~\cite{1994CMaPh.160..431M}. These can be interpreted as superselection sectors, and there is a rich classical cohomological theory underlying these ideas. As we have alluded to, the presence of non-trivial `topology' appears to signal the emergence of inequivalent QRFs. It is fascinating to ask whether this phenomenon might be understood in terms of `quantum cohomology'. It is possible to assign cohomology to a quantum principal bundle by appealing to a construction known as the universal differential calculus \cite{Brzezinski:1992hda,hajac1996strong}.
    \item Building upon the previous point, the universal differential calculus of quantum principal bundles leads to a natural identification of quantum connections and curvatures. The notion of triviality of the quantum bundles considered in this work points to the global factorization of the base and the fiber. This leaves behind the immediate question of what role the quantum analog of curvature plays in physical applications. It has been shown that quantum curvature carries more information than classical~\cite{Majid:1993re, durdevic1995characteristicclassesquantumprincipal, durdevic1996quantumprincipalbundlescharacteristic,G2003}. For this reason, it would be interesting to interpret invariants of the quantum curvature, like characteristic classes, physically and relationally. Since these invariants do not change under quantum gauge transformations, they have the interpretation of being frame independent objects.
    \item Finally, and reiterating a consideration originally addressed in \cite{ahmad2024quantumreferenceframestopdown}, it seems natural to point our new machinery towards quantum gravity away from the semiclassical limit~\cite{Witten:2023xze}. Roughly speaking, one can think of a semiclassical background as \emph{defining} a quantum reference frame. One might expect the full gravitational algebra to be a direct sum of all such backgrounds, but the resulting algebra would fail to be separable and the direct sum structure would suppress quantum overlaps between geometries. Taking the reference frame point of view seriously, however, the analysis of this paper tells us that to complete the gravitational algebra we must further construct change of frame maps which relate isomorphic observables across backgrounds. In this way, the quantum orbifold, with its non-trivial quotient structure, provides a useful algebraic construct for quantifying the overlap of non-diffeomorphic semiclassical backgrounds. It would be interesting to see how this can be explicitly realized in controlled settings where we can compare to path integral approaches.  A hint of this structure has appeared in the analysis of \cite{Ciambelli:2024swv}. There, it was argued that the appearance of a central charge in the quantization of gravity on null hypersurfaces facilitates the emergence of time. This is because the presence of the central charge forces a projective representation in which there is not a single unique ground state, but rather a full vacuum module. The states of the vacuum module were interpreted as quantum reference frames identifying distinct emergent times, but which nevertheless possess nontrivial overlaps except in the strict $G_N \rightarrow 0$ limit.
\end{enumerate}

\section*{Acknowledgment}
S.A.A. and M.S.K. would like to thank Ahmed Almheiri, Tom Faulkner, Samuel Goldman, Simon Lin,  Alexander Smith, and Michael Stone for interesting discussions.  R.G.L. would also like to thank the Perimeter Institute for Theoretical Physics for support, and acknowledges the existence of  U.S. Dept. of Energy grant DE-SC0015655.

\appendix
\renewcommand{\theequation}{\thesection.\arabic{equation}}
\setcounter{equation}{0}

 \section{Hopf-Algebras and Comodules} \label{app: Hopf}

\begin{definition}[Algebra]
	An algebra $A$ over a field $K$ is a vector space together with a pair of maps $m: A \otimes A \rightarrow A$ and $\eta: K \rightarrow A$ called multiplication and the unit such that multiplication is associative
	\beq
		m \circ (id_A \otimes m) = m \circ (m \otimes id_A),
	\eeq
	and the unit implements scalar multiplication
	\beq
		m \circ (\eta \otimes id_A)(z \otimes a) = za, \qquad m \circ (id_A \otimes \eta)(a \otimes z) = za, \; \forall z \in K, a \in A. 
	\eeq
	If the algebra $A$ is unital, we may regard $\eta(z) = z \mathbb{1}_A$. When it is clear or convenient to do so we may write $m(a_1 \otimes a_2) = a_1 a_2$.  
\end{definition}

\begin{definition}[Coalgebra]
	A coalgebra $C$ is a vector space over a field $K$ together with a pair of maps $\delta: C \rightarrow C \otimes C$ and $\epsilon: C \rightarrow K$ called comultiplication and the counit such that comultiplication is associative
	\beq
		(id_C \otimes \delta) \circ \delta = (\delta \otimes id_C) \circ \delta,
	\eeq
	and the counit is defined by
	\beq
		(\epsilon \otimes id_C) \circ \delta(c) = 1 \otimes c, \qquad (id_C \otimes \epsilon) \circ \delta(c) = c \otimes 1, \; \forall c \in C. 
	\eeq
	An element $c \in C$ is called grouplike if $\delta(c) = c \otimes c$ and $\epsilon(c) = 1$.\footnote{This is hopefully familiar from context of a group von Neumann algebra $\mathcal{L}(G)$ which admits a coaction $\delta_G: \mathcal{L}(G) \rightarrow \mathcal{L}(G)^{\otimes 2}$ such that $\delta_G(\ell(g)) = \ell(g) \otimes \ell(g)$ for all $g \in G$.} 
\end{definition}

One often uses so-called Sweedler notation \cite{sweedler-notation} to represent the comultiplication:
	\beq
		\delta(c) = \sum c_{(1)} \otimes c_{(2)}.
	\eeq
	This is meant to express the fact that generically $\delta(c)$ can be written as the sum of elements $c_{(1)}^i \otimes c_{(2)}^i$ over some index set $i \in \mathcal{I}$. For computations it is sufficient to keep track of the form of this sum without concerning one's self with the extent of the sum or the specific summands. We also define
	\beq
		\delta^n \equiv (\delta \otimes id_C) \circ \delta^{n-1} = (id_C \otimes \delta) \circ \delta^{n-1}: C \rightarrow C^{\otimes (n+1)},
	\eeq
	where the order of the composition is irrelevant by coassociativity. In Sweedler notation
	\beq
		\delta^n(c) = \sum \bigotimes_{j = 1}^{n+1} c_{(j)}. 
	\eeq
	At times it will add clarity to discussion if we slightly modify this notation to write
	\beq
		\delta^n(c) = \sum \bigotimes_{j = 1}^{n+1} \delta^n_{(j)}(c). 
	\eeq
	
\begin{definition}[Convolution]
	Let $(A,m,\eta)$ and $(C,\delta,\epsilon)$ be an algebra and a coalgebra, respectively over a common field $K$. These structures induce on the space of maps $\mathcal{L}(C,A)$ the structure of an algebra with a product called the convolution. In particular given $f_1, f_2: C \rightarrow A$ we define $f_1 \star f_2: C \rightarrow A$ by
	\beq
		(f_1 \star f_2)(c) \equiv m \circ (f_1 \otimes f_2) \circ \delta(c).
	\eeq
	In Sweedler notation:
	\beq
		(f_1 \star f_2)(c) = \sum m\bigg(f_1(c_{(1)}) \otimes f_2(c_{(2)})\bigg) = \sum f_1(c_{(1)}) f_2(c_{(2)}). 
	\eeq
	The map $\eta \circ \epsilon: C \rightarrow A$ acts an an identity for the algebra $\mathcal{L}(C,A)$ when endowed with the convolutional product:
	\beq
		f \star (\eta \circ \epsilon) = (\eta \circ \epsilon) \star f = f.
	\eeq
	We say that $f: C \rightarrow A$ is convolutionally invertible if there exists a map $f^{-1}: C \rightarrow A$ such that
	\beq
		f \star f^{-1} = f^{-1} \star f = \eta \circ \epsilon. 
	\eeq
\end{definition}
	
\begin{definition}[Bialgebra]
	A bialgebra $B$ is a vector space over a field $K$ which admits simultaneously the structure of an algebra $(B,m,\eta)$ and the structure of a coalgebra $(B,\delta,\epsilon)$ and for which either $(m,\eta)$ are algebra morphisms or $(\delta,\epsilon)$ are coalgebra morphisms. 
\end{definition}

\begin{definition}[Hopf Algebra]
	Let $(H,m,\eta,\delta,\epsilon)$ be a bialgebra. Since $H$ is both an algebra and a coalgebra, the space of linear maps $\mathcal{L}(H)$ can be endowed with a convolutional product. An endomorphism $S: H \rightarrow H$ is called an antipode if
	\beq
		id_H \star S = S \star id_H = \eta \circ \epsilon. 
	\eeq
	From the discussion above, we see that this identifies the antipode as the convolutional inverse of the identity map $id_H: H \rightarrow H$. A bialgebra equipped with an antipode is called a Hopf algebra.
\end{definition}

\begin{definition}[Comodules]
	Let $H$ be a Hopf algebra and $V$ a vector space. $V$ is called a left $H$-comodule if it admits a left coaction
	\beq
		L^V: V \rightarrow H \otimes V
	\eeq
	such that
	\beq
		(\delta \otimes id_V) \circ L^V = (id_H \otimes L^V) \circ L^V, \qquad (\epsilon \otimes id_V) \circ L^V = id_V. 
	\eeq
	If $V$ is a unital algebra and
	\beq
		L^V(vw) = L^V(v) L^V(w), \qquad L^V(\mathbb{1}_V) = \mathbb{1}_H \otimes \mathbb{1}_V,
	\eeq
	then we called $V$ a left $H$ comodule algebra. Similarly, if $V$ admits a right coaction
	\beq
		R^V: V \rightarrow V \otimes H
	\eeq 
	such that
	\beq \label{Right comodule}
		(R^V \otimes id_H) \circ R^V = (id_V \otimes \delta) \circ R^V, \qquad (id_V \otimes \epsilon) \circ R^V = id_V,
	\eeq
	then $V$ is called a right $H$-comodule. If $V$ is a unital algebra and $R^V$ is an algebra homomorphism, then $V$ is called a right $H$-comodule algebra. 
\end{definition}

Let $H$ be a Hopf algebra with a left $H$ comodule $V$. Moreover, let $B$ be an algebra with module $\Gamma$, e.g. we have a representation $\pi_B: B \rightarrow \text{End}(\Gamma)$ giving rise to an action $\Pi_B: B \otimes \Gamma \rightarrow \Gamma$ such that $\Pi^{(l)}_B(b \otimes \gamma) = \pi_B(b) \gamma$. Given this set up, we can extend the convolutional product to an action of $\mathcal{L}(H,B)$ on $\mathcal{L}(V,\Gamma)$. In particular, given $U: H \rightarrow B$ and $\psi: V \rightarrow \Gamma$ we define $U \star \psi \in \mathcal{L}(V,\Gamma)$ by
\beq
	U \star \psi(v) \equiv \Pi^{(l)}_B \circ (U \otimes \psi) \circ L^V(v) = \sum \Pi^{(l)}_B\bigg(U \circ L^V_{(1)}(v) \otimes \psi \circ L^V_{(2)}(v)\bigg). 
\eeq
Of course, in the case that $V$ is a right $H$ comodule and $B$ admits an antirepresentaton $\overline{\pi}_B: B \rightarrow \text{End}(\Gamma)$ lifting to a right action $\Pi_B^{(r)}: \Gamma \otimes B \rightarrow \Gamma$ we get a right convolutional action of $\mathcal{L}(H,B)$ on $\mathcal{L}(V,\Gamma)$ with
\beq
	\psi \star U(v) \equiv \Pi_B^{(r)} \circ (\psi \otimes U) \circ R^V(v) = \sum \Pi^{(r)}_{B}\bigg(\psi \circ R^V_{(1)}(v) \otimes U \circ R^{V}_{(2)}(v)\bigg). 
\eeq

\bibliographystyle{uiuchept}
\bibliography{RelationalQGeometry}
\end{document}